\theoremstyle{definition}
\newtheorem{theorem}{Theorem}
\newtheorem{property}{Property}
\newtheorem{prop}{Proposition}
\DeclarePairedDelimiter\ceil{\lceil}{\rceil}
\DeclarePairedDelimiter\floor{\lfloor}{\rfloor}
\def\BibTeX{{\rm B\kern-.05em{\sc i\kern-.025em b}\kern-.08em
    T\kern-.1667em\lower.7ex\hbox{E}\kern-.125emX}}
\title{\LARGE \bf Linear Aggregate Model for Realizable Dispatch of Homogeneous Energy Storage}
\author{
Mazen Elsaadany$^{1,3}$,
Mads R. Almassalkhi$^{1,3}$, and
Simon~H.~Tindemans$^{2}$
\thanks{$^{1}$MRA and ME are with the Department of Electrical and Biomedical Engineering, University of Vermont, Burlington, VT, USA. ORCID (MRA): 0000-0003-3536-5563 ORCID (ME): 0009-0005-5756-4725.}
\thanks{$^{2}$SHT is with the Department of Electrical Sustainable Energy, Delft University of Technology, Mekelweg 4, 2628 CD, Delft, The Netherlands. ORCID: 0000-0001-8369-7568.}
\thanks{$^{3}$MRA and ME recognize support from DOE/PNNL via award DE-AC05-76RL01830. MRA was also supported by NSF Award ECCS-2047306.}
}
\begin{document}

\maketitle
\thispagestyle{empty}
\pagestyle{empty}

\begin{abstract}
 To optimize the dispatch of batteries, a model is required that can predict the state of energy (SOE) trajectory for a chosen open-loop power schedule to ensure admissibility (i.e., that schedule can be realized). However, battery dispatch optimization is inherently challenging when batteries cannot simultaneously charge and discharge, which begets a non-convex complementarity constraint. In this paper, we develop a novel composition of energy storage elements that can charge or discharge independently and provide a sufficient linear energy storage model of the composite battery. This permits convex optimization of the composite battery \textcolor{black}{dispatch} while ensuring the admissibility of the resulting (aggregated) power schedule and its disaggregation to the individual elements.
\end{abstract}

\begin{keywords}
Battery, energy storage, convex optimization, priority-based control, complementarity constraints.
\end{keywords}

\section{Introduction}

Battery energy storage systems (BESS) are increasingly deployed to help facilitate the grid's uptake of variable renewable energy and improve reliability ~\cite{Battery_review_paper,BESS_simultaneous_participation,BESS_renewables}.
In many applications, BESS dispatch involves coordinating large numbers of energy storage elements. Examples include battery racks that comprise a single BESS asset, BESS container assets in a utility-scale battery storage facility as shown in Fig~\ref{fig:BESS_BlockDiagram}, or aggregated BESS units coordinated as a virtual power plant (VPP). 

Optimally scheduling the charging and discharging of storage elements requires solving \textcolor{black}{a non-convex optimization problem. Because most storage technologies do not permit simultaneous charging and discharging, \emph{complementarity constraints} are usually required to force the product of the charging and discharging powers to be equal to zero \emph{at each time step, and for each element}. 
This nonlinearity is often implemented using binary variables~\cite{MILP_for_CC,prat_2024}.} The non-convexity of the composite BESS dispatch can present a computational challenge when incorporating BESSs into more complex, large-scale power system optimization problems, such as optimal power flow (OPF) and economic dispatch~\cite{qi2025locationalenergystoragebid}.

\begin{figure}[t]
\centerline{\includegraphics[width=0.85\columnwidth]{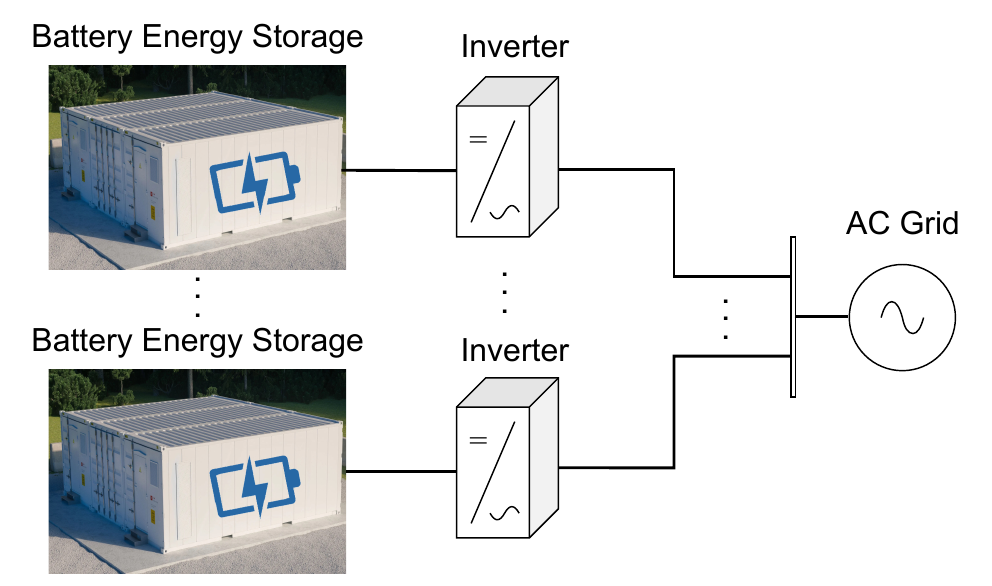}}
\caption{Example of a composite battery with AC-coupled battery elements.}
\label{fig:BESS_BlockDiagram}
\end{figure}

This motivates the need for a convex BESS model. A common approach is to obtain a convex relaxation by omitting the complementarity constraint, {\color{black} which results in a predicted state of energy (SOE) that is a provable lower bound on the realized SOE and can lead to unexpected SOE saturation and an inadmissible dispatch \cite{RobustNawaf, almassalkhi_model-predictive_2015}. Additionally, } the relaxation permits solutions that simultaneously charge and discharge \cite{prat_2024}. 

{\color{black} At first glance, such solutions may appear irrational because charging/discharging inefficiencies result in unnecessary dissipation of energy. Nevertheless, there are cases where such solutions are a rational outcome of a battery dispatch schedule, e.g., when energy prices are negative \cite{Baldick_Neg_Price_Bad}, or as a byproduct of minimum up- or downtime constraints in a centrally dispatched system}. {\color{black} Any dissipative effects on battery aging can be captured by adding a cycling cost to the objective function, so that the benefit must outweigh the cost.} Sufficient conditions under which simultaneously charging and discharging is optimal (i.e., rational) are given in~\cite{prat_2024}. 

{\color{black} Conversely, there are cases where simultaneous charging and discharging is guaranteed not to occur. } Sufficient \emph{ex-post} conditions for an exact relaxation in economic dispatch are provided in~\cite{Sufficient_Cond_for_Relax}. \emph{Ex-ante} conditions for exactness of the relaxed problem are presented in~\cite{Lin_Exact_Cond_Ex_Ante}, assuming that battery power limits, as well as initial and maximum energy capacities, satisfy certain conditions. Furthermore,~\cite{positive_prices,Jakob_Pos_Quadratic_Cost,prat_2024} show that for specific economic dispatch formulations with positive linear or quadratic cost functions, the relaxation at optimality would be exact with no simultaneous charging and discharging.

Various alternative problem formulations exist to reduce the reliance on the computationally demanding complementarity constraint. For a single time step, tighter relaxations have been proposed \cite{D.Pozo_paper, elgersma2024}, which can also heuristically be applied to time series, albeit without realizability guarantees. 
Authors in \cite{K.Baker_DCOPF_No_CC} avoid the use of the complementarity constraint when applied to DC Optimal Power Flow (DCOPF) but add a penalty term to the objective function to ensure the exactness of the relaxed model. 
Instead of modifying objective, the authors in ~\cite{RobustNawaf,Mazen_CDC_paper} present an alternative robust linear battery model which represents a convex restriction of the non-convex formulation that sacrifices optimality for guaranteed admissibility. The robust linear model omits the complementarity constraint but envelopes the actual SOE trajectories to guarantee that any feasible power trajectory is realizable. However, the envelopes beget conservativeness as a function of energy conversion efficiencies and the prediction horizon, which limits practicality.
\textcolor{black}{Now, a composite battery made up of identical storage elements can be scheduled with various methods. 
One way is to model and optimize the schedule of each storage element individually, requiring many complementarity constraints~\cite{equal_unequal}. A simpler and computationally efficient approach is to use a static \emph{equal power sharing} policy where all elements charge or discharge equally~\cite{equal_unequal,equal_unequal2}. The elements can then be aggregated into a composite battery whose dispatch is optimized, but still requires one non-convex complementarity constraint.} 
In summary, previous work in energy storage optimization that circumvents the complementarity condition either has been application-specific and holds only if specific criteria on the objective function and/or the battery parameters are met, or presents convex restrictions that may yield significantly sub-optimal solutions. Furthermore, previous work has generally (and often, implicitly) assumed an equal power-sharing policy, which limits how elements can be dispatched.
This manuscript revisits the modeling of aggregations of storage units by presenting a novel composite model of $N$ identical and independently controllable battery elements, where simultaneous charging and discharging power trajectories at the composite level can be implemented by inhomogeneous disaggregation to the individual elements. Throughout, BESS aggregations are used as an example, but the proposed work can be extended to different energy storage assets. 
The main contributions of the paper are as follows: 
\begin{enumerate}
    \item We present a novel linear model that provides a sufficient representation of aggregations of homogeneous energy storage elements. For large numbers of elements or and small control time steps, the representation asymptotically approaches the true feasibility envelope.
    \item We develop a priority-based strategy, based on element SOE, to perform composite-to-\textcolor{black}{elements} disaggregation.  
    \item {\color{black} We prove that all solutions of the linear aggregate model are realizable using the priority-based strategy, respecting element-wise power limits, energy limits, and complementarity constraints after disaggregation.  }
\end{enumerate}
Next, we develop the composite battery model in Section~II. Section~III presents the proposed disaggregation strategy and sufficient conditions for the admissibility of a composite power trajectory. Simulation results that demonstrate the performance of the proposed model and disaggregation strategy make up Section~IV. Conclusion and future work are in Section~V.
\section{Composite Battery Model}

{\color{black} In this section, we will present element-wise and composite models of BESS systems, composed of $N$ identical and independently controllable elements (e.g., as shown in Fig~\ref{fig:BESS_BlockDiagram}). }
For element $i\in \{1,\hdots, N\}$, the charge/discharge power is set in  \emph{controller time steps} $l$ with duration $\delta t$. The charge/discharge power between $l$ and $l+1$, the charge/discharge efficiency and the SOE at timestep $l$ are denoted by $P^i_\text{c/d}(l)$, $\eta_\text{c/d}$, and $E^i(l)$, respectively. A realizable BESS dispatch must satisfy power limits (${P}_\text{c/d,max}$), energy limits (${E}_\text{max}$), and complementarity constraints for all battery elements ($\forall i=1,\hdots,N$) and time steps ($\forall l=0,\hdots,L-1$). Note that all elements have identical power and energy limits. This results in \eqref{eqn:Elements Constraints}, where $E^i_0$ is the initial SOE of battery element $i$:
\begin{subequations}
\label{eqn:Elements Constraints}
    \begin{align}
        &E^i(l+1)=E^i(l)+ \delta t\eta_\text{c}P_\text{c}^i(l) -\delta t\frac{1}{\eta_\text{d}}P_\text{d}^i(l) \label{eqn:Elements SOC} \\
        &E^i(0) = E^i_0 \label{eqn:Elements SOC IC}\\
        &0\leq E^i(l+1)\leq {E}_\text{max}\label{eqn:Elements SOC Limit} \\
        &0\leq P_\text{c/d}^i(l)\leq {P}_\text{c/d,max}\label{eqn:Elements Pc Limit}\\
        &0=P_\text{c}^i(l)P_\text{d}^i(l) \label{eqn:Elements CC},
    \end{align}
    \end{subequations} 
{\color{black} where \eqref{eqn:Elements SOC Limit}-\eqref{eqn:Elements CC} represent energy limits, power limits and, complementarity constraints, respectively.} The scheduling of the composite battery is done with (optionally) longer \emph{scheduler time steps} of duration $\Delta t = M \delta t, M \in \mathbb{N}^+$. We denote these scheduling time steps with index $[k]$ in square brackets, where $k=0,\ldots, K-1$ and $L=MK$. Constant power in step $k$ implies the following constraint for all $l,k$ with $k =\floor{\frac{l}{M}}$: 
\begin{align}
    \sum_{i=1}^{N} P_\text{c}^i(l)=P_\text{c}[k]\quad  \land \quad 
    \sum_{i=1}^{N} P_\text{d}^i(l)=P_\text{d}[k]. 
    \label{eqn:Sum of Element}
\end{align}
Typically, it is assumed that a composite BESS system is modeled and operated under \emph{balanced} conditions: the dispatcher (e.g., EMS) 
maintains \textit{equal SOE} ($E_0^i = E_0~~\forall i$) and adopts an \textit{equal power sharing policy} across its elements.
This results in a composite battery with power and energy limits of $N{P}_\text{c/d,max}$ and $N{E}_\text{max}$ as shown in~\eqref{eqn:Composite Constraints}:
\begin{subequations}
    \label{eqn:Composite Constraints}
    \begin{align}
    E[k+1]&=E[k]+ \Delta t\eta_\text{c}P_\text{c}[k] -\Delta t\frac{1}{\eta_\text{d}}P_\text{d}[k] \label{eqn:Composite SOC} \\
    E[0] &= \sum_{i}E^i_0 \label{eqn:Composite SOC IC}\\
    0&\leq E[k+1]\leq N{E}_\text{max} \label{eqn:Composite SOC Limit}\\
    0&\leq P_\text{c/d}[k]\leq N{P}_\text{c/d,max}\label{eqn:Composite Pc Limit}\\
    0&=P_\text{c}[k]P_\text{d}[k]\label{eqn:Composite CC}.
\end{align}
\end{subequations}
Under equal power sharing, a complementarity constraint \eqref{eqn:Composite CC} at the composite battery level is sufficient to satisfy~\eqref{eqn:Elements CC} for all~$i$ and all~$l$. Under balanced operation, a composite dispatch satisfying~\eqref{eqn:Composite Constraints} would satisfy~\eqref{eqn:Elements Constraints} for all~$i$ and all~$l$.. 

The complementarity constraint is often neglected to obtain a convex relaxation of the problem. For example, as
\begin{subequations} \label{eqn:RelaxedTighter}
    \begin{align}
        \eqref{eqn:Composite SOC} &- \eqref{eqn:Composite Pc Limit} \label{eqn:Relaxed}\\
        \frac{P_\text{c}[k]}{N{P}_\text{c,max}}&+\frac{P_\text{d}[k]}{N{P}_\text{d,max}} \leq 1 \label{eqn:Relaxed Composite cutting plane},
    \end{align}
\end{subequations}
{\color{black}where the cutting plane in~\eqref{eqn:Relaxed Composite cutting plane} is one of the facets of the convex hull of the feasible set defined by \eqref{eqn:Composite Constraints}, typically used to tighten the relaxation. Even tighter relaxations have been developed \cite{D.Pozo_paper, elgersma2024} to reduce the prevalence of solutions with simultaneous charging and discharging in real-world problems. }
However, such relaxations can result in an optimal power schedule with simultaneous charging and discharging, which cannot be implemented under equal load sharing.

Such cases \emph{can} be addressed by a composite battery if we omit the equal load sharing requirement. 
This way, we can realize simultaneous charging and discharging dispatch schedules by appropriately disaggregating them 
across 
the $N$ elements while satisfying~\eqref{eqn:Elements CC}. This increases the capability of the composite battery at the cost of computational complexity with $N$ complementary constraints. 

In the remainder of this paper, we address this challenge by presenting a sufficient (but not unique) priority-based disaggregation strategy that engenders a \emph{linear} set of provably realizable composite power trajectories. These trajectories permit simultaneous charging and discharging of the BESS and approximate the full solution space arbitrarily closely.

\section{Composite Trajectory Disaggregation}

A control policy is used to disaggregate a desired piecewise constant composite power inputs $P_\text{c/d}[k]$ into $N$ individual element charge/discharge inputs, $\{P^i_\text{c/d}(l)\}_{i=1}^N$.  
 One such controller is the presented priority stack controller discussed below. 
 Sufficient conditions are developed that -- in combination with the controller -- enable the characterization of $P_\text{c}[k]$ and $P_\text{d}[k]$ that guarantees~\eqref{eqn:Elements Constraints} is satisfied.
\subsection{Priority Stack Controller (PSC)}

For every controller time-step $l$, the priority stack controller (PSC) sorts all $N$ elements based on the SOE. The element with the lowest SOE has the highest charging priority, while the element with the highest SOE has the highest discharging priority. \textcolor{black}{Note that similar control schemes have been proposed in~\cite{karan_kalsi} and investigated in a real-time operational setting in~\cite{equal_unequal,,equal_unequal2}}. 

The priority stack results in $N_\text{c}$ elements with lowest SOEs charging and $N_\text{d}$ elements with highest SOEs discharging, as follows:
    \begin{align}
        N_\text{c}(l) :=\ceil[\Big]{\frac{P_\text{c}[k]}{{P}_\text{c,max}}}
        \quad \land \quad 
        N_\text{d}(l) :=\ceil[\Big]{\frac{P_\text{d}[k]}{{P}_\text{d,max}}}.
        \label{eqn:NcNd_ceil eq}
    \end{align}
Disaggregating the composite charging trajectory for $P_\text{c}[k]>0$, the PSC sets $P^i_\text{c}(l)=P_\text{c,max}$ for the $N_\text{c}(l)-1 $ elements with the lowest SOE and the next-lowest element is assigned $P^i_\text{c}(l)=P_\text{c}[k]-(N_\text{c}(l)-1)P_\text{c,max}$. Discharging is implemented {\color{black} analogously.}
    The PSC is summarized in Algorithm~\ref{alg:Priority Stack} and begets Property~\ref{lemma:priority stack}.
    
    \begin{algorithm}
    \caption{Priority Stack Controller (PSC)}
    \label{alg:Priority Stack}
    \begin{algorithmic}[1]
    {\small 
        \State Given $P_\text{c}[k]$ and $P_\text{d}[k]$
        \For{each timestep $l$}
            \State Sort elements based on $E^i(l)$ in ascending order
            \State Calculate $N_\text{c/d}(l) := \ceil{\frac{P_\text{c/d}[k]}{P_\text{c/d,max}}}$
            \State Assign $P^i_\text{c/d}(l)$ for first $N_\text{c}(l)$/last $N_\text{d}(l)$ elements
            \State Update SOE for all elements: $E^i(l)\rightarrow E^i(l+1)$
        \EndFor
    }
    \end{algorithmic}
    \end{algorithm}
    
\begin{property}\label{lemma:priority stack}
Under PSC, if elements $i,j$ satisfy $E^i(l) \ge E^j(l)$,  then $P^i_\text{c}(l)\leq P^j_\text{c}(l)$ and $P^i_\text{d}(l)\geq P^j_\text{d}(l)$. 
\end{property}

\subsection{Guaranteeing Element Complementarity Constraints}
Next, we analyze the effects of the PSC on~\eqref{eqn:Elements Constraints}, on complementarity conditions~\eqref{eqn:Elements CC}, and on energy limits~\eqref{eqn:Elements SOC Limit}.
To ensure the complementarity constraints are satisfied at the element level, $(P_\text{c}^i(l)P_\text{d}^i(l)=0\quad \forall i,\forall l)$, the priority stack cannot issue a charging and discharging demand to the same battery element $i$ simultaneously.  
Thus, any overlap in the selection of $N_\text{c}(l)$ charging and $N_\text{d}(l)$ discharging elements imply a violation of the complementarity constraints, which must, therefore, satisfy
\begin{equation}\label{eqn:NcNd}
    N_\text{c}(l) + N_\text{d}(l)\leq N, \quad \forall l.
\end{equation}

\begin{prop} \label{prop: cutting plane}
Under the PSC, if $P_\text{c}[k]$ and $P_\text{d}[k]$ satisfy 
\begin{equation}
\label{eqn:new cut plane}
    \frac{P_\text{c}[k]}{{P}_\text{c,max}}+\frac{P_\text{d}[k]}{{P}_\text{d,max}} \leq N-1,
\end{equation}
then the resulting $N_\text{c}(l)$ and $N_\text{d}(l)$ satisfy~\eqref{eqn:NcNd}.
\end{prop}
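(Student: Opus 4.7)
\textbf{Proof plan for Proposition~\ref{prop: cutting plane}.}
The plan is to reduce the claim to an elementary ceiling estimate applied termwise. First, I would note that by~\eqref{eqn:NcNd_ceil eq}, $N_\text{c}(l)$ and $N_\text{d}(l)$ depend on $l$ only through the scheduler index $k=\floor*{l/M}$, since $P_\text{c}[k]$ and $P_\text{d}[k]$ are piecewise constant over each scheduling interval. Hence it suffices to establish the bound $N_\text{c}(l)+N_\text{d}(l)\le N$ for an arbitrary fixed $k$, and the PSC hypothesis enters only through its use of these two ceiling-based counts.

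Next, I would apply the strict ceiling inequality $\lceil x\rceil < x+1$, valid for every real $x\ge 0$, to each of the two terms defining $N_\text{c}$ and $N_\text{d}$. Summing these two strict bounds yields
\begin{equation*}
N_\text{c}(l)+N_\text{d}(l) \;<\; \frac{P_\text{c}[k]}{P_\text{c,max}} + \frac{P_\text{d}[k]}{P_\text{d,max}} + 2.
\end{equation*}
Invoking the hypothesis~\eqref{eqn:new cut plane} on the right-hand side would then give $N_\text{c}(l)+N_\text{d}(l) < (N-1)+2 = N+1$. Since $N_\text{c}(l)$ and $N_\text{d}(l)$ are nonnegative integers by the ceiling operation, their sum is an integer strictly below $N+1$, which is equivalent to $N_\text{c}(l)+N_\text{d}(l)\le N$, i.e.\ exactly~\eqref{eqn:NcNd}.

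The main thing to be careful about is that the strictness of $\lceil x\rceil < x+1$ is essential: the weaker non-strict version would only yield $N_\text{c}(l)+N_\text{d}(l)\le N+1$ and thus fail to rule out a one-element overlap between the charging and discharging stacks. Consequently, the gap of exactly one unit between the hypothesis bound ($N-1$) and the conclusion bound ($N$) is explained entirely by the two ceiling round-ups, each of which can add strictly less than one. Beyond this observation there is no real obstacle; the argument is purely arithmetic and does not require any case analysis on $P_\text{c}[k]=0$ or $P_\text{d}[k]=0$, since $\lceil 0\rceil=0$ is consistent with the same inequality. I would also remark that the bound in~\eqref{eqn:new cut plane} is tight, as witnessed by any pair with $P_\text{c}[k]/P_\text{c,max}$ and $P_\text{d}[k]/P_\text{d,max}$ just above half-integers summing to slightly more than $N-1$, which would then force $N_\text{c}+N_\text{d}=N+1$.
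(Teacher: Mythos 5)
Your proof is correct, and it rests on the same elementary ceiling arithmetic as the paper's, but the bookkeeping is organized differently. The paper decomposes $P_\text{c}[k]/P_\text{c,max}$ into its integer part $\Gamma_\text{c}$ and fractional part $\Delta_\text{c}$, bounds $N_\text{c}(l)\le \Gamma_\text{c}+1$, and then evaluates $\lceil N-1-\Gamma_\text{c}-\Delta_\text{c}\rceil = N-1-\Gamma_\text{c}$ exactly, so the two bounds sum to $N$ with no final appeal to integrality. You instead apply the strict bound $\lceil x\rceil < x+1$ symmetrically to both terms and recover the missing unit from the fact that $N_\text{c}(l)+N_\text{d}(l)$ is an integer strictly below $N+1$. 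Your version is slightly more symmetric and avoids the paper's somewhat delicate exact ceiling evaluation (which requires checking the cases $\Delta_\text{c}=0$ and $\Delta_\text{c}\in(0,1)$); the paper's version makes the one-unit gap attributable to a single term rather than to a rounding argument at the end. Both are sound, and your observation that strictness of $\lceil x\rceil<x+1$ is the essential ingredient is exactly right.

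One small quibble with your closing tightness remark: it is correct in spirit but the proposed witness is off. Ratios just above half-integers summing to slightly more than $N-1$ yield $N_\text{c}+N_\text{d}=N$, which does not violate~\eqref{eqn:NcNd}. To see that the bound $N-1$ cannot be weakened, take both ratios just above integers $a$ and $b$ with $a+b=N-1$, so the sum exceeds $N-1$ by an arbitrarily small amount while $N_\text{c}+N_\text{d}=a+b+2=N+1$. This does not affect the validity of the proof itself.
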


\begin{proof}
Define $\Gamma_\text{c}:=\lfloor \frac{P_\text{c}[k]}{{P}_\text{c,max}}\rfloor$ and $\Delta_\text{c}:= \frac{P_\text{c}[k]}{{P}_\text{c,max}} - \Gamma_\text{c} \in [0,1)$ {\color{black}to be the integer and fractional components of the quantity $\frac{P_\text{c}[k]}{{P}_\text{c,max}}$. }
Then $\ceil[\Big]{\frac{P_\text{c}[k]}{{P}_\text{c,max}}} \le \Gamma_c + 1$ and, from \eqref{eqn:new cut plane}, we get 
\begin{align}\label{eqn:ceil second term}
    \ceil[\Big]{\frac{P_\text{d}[k]}{{P}_\text{d,max}}} \leq \ceil[\Big]{N-1 -(\Gamma_\text{c} + \Delta_\text{c}) } = N -1-\Gamma_\text{c}.
\end{align}
Adding the inequalities, we get 
$N_c(l)+N_d(l)\le N$ $\forall l$. 
\end{proof}
Thus, adding~\eqref{eqn:new cut plane} to the composite formulation in~\eqref{eqn:Composite Constraints} will satisfy element complementarity constraints under PSC. 
The feasible regions dictated by~\eqref{eqn:NcNd} and \eqref{eqn:new cut plane} are shown in Fig.~\ref{fig:feasible region}. 
\begin{figure}[ht]
\centering
\includegraphics[width=.475\columnwidth]{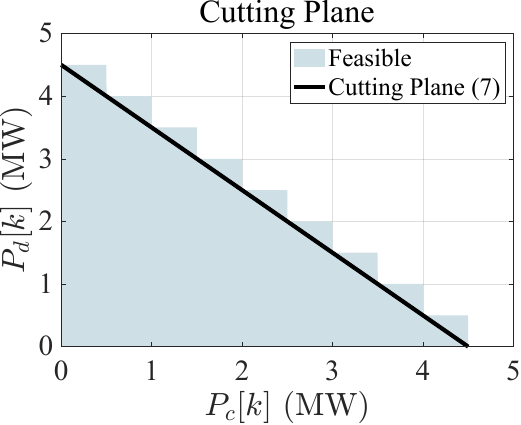}~
\includegraphics[width=.475\columnwidth]{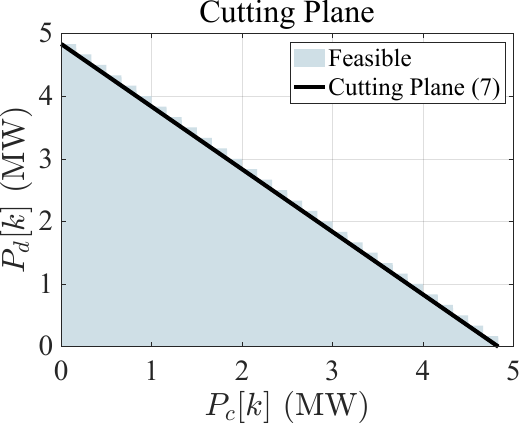}
\caption{Feasible region for $N=10$ and $P_\text{c/d,max}=0.5$MW (left) and $N=30$ and $P_\text{c/d,max}=0.17$MW (right). SOE limits are omitted here.}
\label{fig:feasible region}
\end{figure}
\subsection{Relating Element and Composite Energy Constraints}
Similarly, we want to relate element energy limits to those in the composite formulation. Under PSC, the battery elements are charged/discharged individually every $l$, which results in differing SOE trajectories. The spread of element SOEs can impact the realizability of the composite dispatch when individual elements hit their SOE limits. 
Constraint violations can be avoided by quantifying the maximum possible SOE spread under PSC and using that as an energy buffer applied to the composite battery energy limits. 

\begin{prop} \label{prop:SOC difference boud}
Under the PSC, a composite dispatch that satisfies 
\begin{equation}\label{eqn:Composite_Energy_Limits}
    N\varepsilon\leq E[k]\leq N (E_\text{max}-\varepsilon),
\end{equation}
with
\begin{equation} \label{eq:DEmax-def}
    \varepsilon := \delta t\left(\eta_\text{c}{P}_\text{c,max} + \frac{{P}_\text{d,max}}{\eta_\text{d}}\right),
\end{equation}
will not violate any individual element's energy limit, provided that ($i$) $\max_{i, j}\{ |E^i_0-E^j_0|\} \leq \varepsilon$ and ($ii$) the control time step $\delta t=\Delta t/M$ is sufficiently small to ensure $\varepsilon \le \frac12 E_\text{max}$.

\end{prop}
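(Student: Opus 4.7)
The plan is to bound individual SOEs via two ingredients: an invariance of the pairwise SOE spread under PSC, and a sandwich around the composite mean SOE.

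First, I would prove a spread-invariance lemma: if $\max_{i,j}|E^i(l)-E^j(l)| \le \varepsilon$, then the same bound persists at step $l+1$. For any pair with $E^i(l) \ge E^j(l)$, Property~\ref{lemma:priority stack} gives $P^i_\text{c}(l) \le P^j_\text{c}(l)$ and $P^i_\text{d}(l) \ge P^j_\text{d}(l)$. Substituting into~\eqref{eqn:Elements SOC} and subtracting, $E^i(l+1) - E^j(l+1) = [E^i(l) - E^j(l)] - \delta t[\eta_\text{c}(P^j_\text{c}(l) - P^i_\text{c}(l)) + (P^i_\text{d}(l) - P^j_\text{d}(l))/\eta_\text{d}]$. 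The bracketed quantity is nonnegative (so the new gap cannot exceed the old one, hence stays $\le \varepsilon$) and, by the element power limits, is at most $\delta t(\eta_\text{c} P_\text{c,max} + P_\text{d,max}/\eta_\text{d}) = \varepsilon$ (so the new signed gap cannot fall below $-\varepsilon$). Combined with assumption ($i$) and induction on $l$, this yields $\max_{i,j}|E^i(l)-E^j(l)| \le \varepsilon$ for every $l$.

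Second, let $\bar E(l) := \tfrac{1}{N}\sum_i E^i(l)$. The spread bound immediately implies $\bar E(l) - \varepsilon \le E^i(l) \le \bar E(l) + \varepsilon$ for every $i$, since both $\min_i E^i(l) \le \bar E(l)$ and $\max_i E^i(l) \ge \bar E(l)$ differ from their counterpart by at most $\varepsilon$. Within any scheduler step $k$, composite charging and discharging powers are constant, so by~\eqref{eqn:Sum of Element} the sum $\sum_i E^i(l)$ evolves linearly in $l$, interpolating between $\sum_i E^i(Mk) = E[k]$ and $\sum_i E^i(M(k+1)) = E[k+1]$. Both endpoints satisfy~\eqref{eqn:Composite_Energy_Limits}, so by convexity $N\varepsilon \le \sum_i E^i(l) \le N(E_\text{max}-\varepsilon)$ for every $l$, giving $\bar E(l) \in [\varepsilon, E_\text{max}-\varepsilon]$. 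The sandwich then yields $0 \le E^i(l) \le E_\text{max}$ for every $i$ and $l$, as required. Condition ($ii$), $\varepsilon \le \tfrac12 E_\text{max}$, enters here to ensure the interval $[N\varepsilon, N(E_\text{max}-\varepsilon)]$ is nonempty so that~\eqref{eqn:Composite_Energy_Limits} is nonvacuous.

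The main obstacle is the spread-invariance step, where one must verify the two-sided magnitude bound in all edge cases admitted by PSC: simultaneous charge/discharge allocations (where Proposition~\ref{prop: cutting plane} ensures no element is simultaneously assigned $P^i_\text{c}>0$ and $P^i_\text{d}>0$), ties in SOE ranking (which admit tie-breaking freedom but preserve Property~\ref{lemma:priority stack}), and the boundary elements receiving only a fractional allocation from the priority stack. Ensuring no hidden violation across these combined cases is the most delicate bookkeeping, but it reduces to careful inspection of Algorithm~\ref{alg:Priority Stack} together with~\eqref{eqn:NcNd_ceil eq} and Proposition~\ref{prop: cutting plane}.
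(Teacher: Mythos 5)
Your proposal is correct and follows essentially the same route as the paper's proof: establish invariance of the SOE spread ($\Delta E(l)\le\varepsilon$) via Property~\ref{lemma:priority stack} and the power limits, then sandwich each $E^i(l)$ within $\varepsilon$ of the mean $\bar E(l)$, use linearity of the composite energy within each scheduler step to transfer~\eqref{eqn:Composite_Energy_Limits} to controller steps, and invoke condition ($ii$) only for nonemptiness. No substantive differences.
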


\begin{proof}
Define the maximum difference in SOE across all elements as $\Delta E(l):= \max_{i, j}\{ |E^i(l)-E^j(l)|\}$ (i.e., the SOE spread).   
Consider element SOEs $E^i(l)$ and $E^j(l)$ and, without loss of generality, let $E^i(l)\ge E^j(l)$. Using \eqref{eqn:Elements SOC}  the following is obtained,
\begin{align}
      E^i(l+1) - E^j(l+1) &= \left(E^i(l) - E^j(l)\right) +\\ \frac{\Delta t}{M}\eta_\text{c}\left(P^i_\text{c}(l)-P^j_\text{c}(l) \right)& - \frac{\Delta t}{M\eta_\text{d}} \left(P^i_\text{d}(l)-P^j_\text{d}(l) \right). \notag
\end{align}
From Property~\ref{lemma:priority stack} and \eqref{eqn:Elements Pc Limit}, we have $-P_\text{c,max}\le P^i_\text{c}(l)-P^j_\text{c}(l)\le 0$ and $0\le P^i_\text{d}(l)-P^j_\text{d}(l) \le P_\text{d,max}$, which implies 
\begin{subequations}\label{eqn:SOC_diff_bound}
    \begin{align}
        E^i(l+1) - E^j(l+1) \leq & E^i(l) - E^j(l) \leq \Delta E(l), \\
        E^i(l+1) - E^j(l+1) \ge & -\frac{\Delta t}{M}\left(\eta_\text{c}P_\text{c,max} + \frac{P_\text{d,max}}{\eta_\text{d}}\right) = -\varepsilon, \label{eqLastIneq}
    \end{align}
\end{subequations}
where the inequality in~\eqref{eqLastIneq} uses $E^i(l)- E^j(l) \ge 0$.
 Considering all $i,j$, we can bound $\Delta E(l+1) \leq  \max\left\{ \Delta E(l), \varepsilon \right\}$.
 Moreover, we see that  $\Delta E(l) \leq \varepsilon \Rightarrow \Delta E(l+1) \leq \varepsilon$, so it follows that for a composite battery with initial SOE spread $\Delta E(0)\leq \varepsilon$ (condition (i)), we maintain $ \Delta E(l) \leq \varepsilon, \forall l$.

 Because energy linearly increases/decreases within each scheduling time step $k$ under PSC, \eqref{eqn:Composite_Energy_Limits} implies its equivalent for controller time steps $l$: $N\varepsilon\leq E(l)\leq N (E_\text{max}-\varepsilon)$. Dividing by $N$ begets $\varepsilon\leq \overline{E}(l)\leq E_\text{max}-\varepsilon$, where $\overline{E}(l):=\sum_j E^j(l)/N$ is the average element SOE. Because the SOE spread $\Delta E(l) \le \varepsilon$, we must have $\overline{E}(l) - \varepsilon \le E^i(l) \le \overline{E}(l) + \varepsilon, \forall i$. 
Thus, \eqref{eqn:Elements SOC Limit} holds for all $i$ and all $l$.

 Lastly, we require condition (ii) ($\varepsilon \le \frac12 E_\text{max}$) to ensure that \eqref{eqn:Composite_Energy_Limits} does not result in an empty set (infeasibility). This can is achieved with a sufficiently large value of $M$.
\end{proof}

\subsection{Realizable Composite Power Trajectories}

The following theorem provides a linear formulation for realizable dispatch of a composite battery.
\begin{theorem} \label{the:RCB}
If $\max_{i, j}\{ |E^i_0-E^j_0|\}  \le \varepsilon$, then any feasible composite battery dispatch schedule $\{P_\text{c}[k],P_\text{d}[k]\}_{k=0}^{K-1}$ satisfying 
\begin{subequations}
    \label{eqn:Realizable Composite}
    \begin{align}
        \eqref{eqn:Composite SOC}, 
          & && k=0,\hdots, K-1
        \label{eq:relaxedModelhidden} \\
        \eqref{eqn:Composite SOC IC}, 
          & && \\
        \frac{P_\text{c}[k]}{N{P}_\text{c,max}} + \frac{P_\text{d}[k]}{N{P}_\text{d,max}} 
        \leq \frac{N-1}{N}, &&& k=0,\hdots, K-1
        \label{eqn:RCB cut plane} \\
        N\varepsilon \leq E[k] \leq N\left(E_\text{max}-\varepsilon\right), & 
         && k=0,\hdots, K
        \label{eqn:RCB energy buffer} \\
        P_\text{c}[k], P_\text{d}[k] \geq 0, 
        & &&  k=0,\hdots, K-1 \label{eq:RCB power limit}
    \end{align}
\end{subequations}
will satisfy~\eqref{eqn:Elements Constraints} for all $N$ elements under PSC and is, thus, realizable.    
\end{theorem}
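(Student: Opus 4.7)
The plan is to show that every element-wise constraint in~\eqref{eqn:Elements Constraints} is satisfied by the element trajectories produced by the PSC whenever the composite dispatch satisfies~\eqref{eqn:Realizable Composite}. The argument is essentially a composition of Propositions~\ref{prop: cutting plane} and~\ref{prop:SOC difference boud}, together with a few facts that follow directly from how the PSC is defined in Algorithm~\ref{alg:Priority Stack}.

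First, I would dispatch the parts that hold by construction: the element SOE dynamics~\eqref{eqn:Elements SOC} are precisely the update rule used in step~6 of the PSC; the initial conditions~\eqref{eqn:Elements SOC IC} are given as data; and the element power limits~\eqref{eqn:Elements Pc Limit} follow from step~5, which assigns $P_\text{c/d,max}$ to at most $N_\text{c/d}(l)-1$ elements, a remainder in $[0,P_\text{c/d,max})$ to at most one additional element, and zero to the rest.

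Next, I would verify the element complementarity constraints~\eqref{eqn:Elements CC} by invoking Proposition~\ref{prop: cutting plane}. Multiplying the composite cutting plane~\eqref{eqn:RCB cut plane} through by $N$ recovers exactly~\eqref{eqn:new cut plane}, so the proposition applies and guarantees $N_\text{c}(l)+N_\text{d}(l)\le N$ for all $l$. Consequently, the charging and discharging sets selected by the PSC at step~5 are disjoint, forcing $P^i_\text{c}(l)P^i_\text{d}(l)=0$ for every element and every control time step.

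Finally, I would handle the element energy limits~\eqref{eqn:Elements SOC Limit} via Proposition~\ref{prop:SOC difference boud}. Condition~(i) of that proposition is precisely the theorem hypothesis on initial SOE spread, and the composite energy buffer~\eqref{eqn:RCB energy buffer} is a restatement of~\eqref{eqn:Composite_Energy_Limits}. Condition~(ii), namely $\varepsilon\le \tfrac12 E_\text{max}$, is automatic whenever~\eqref{eqn:Realizable Composite} admits any feasible trajectory, since otherwise~\eqref{eqn:RCB energy buffer} defines an empty set and the theorem is vacuous. The proposition then yields $0\le E^i(l)\le E_\text{max}$ for every element at every controller step, completing the verification. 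The chief obstacle is not in any single step but in recognizing that the theorem is a compact packaging of the two earlier propositions: once the $N$-scaled equivalence between~\eqref{eqn:RCB cut plane} and~\eqref{eqn:new cut plane} is noted, the remainder reduces to careful bookkeeping across the controller and scheduler time scales.
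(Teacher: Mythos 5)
Your proposal is correct and follows essentially the same route as the paper's proof: invoke Proposition~\ref{prop: cutting plane} (after the $N$-scaling of~\eqref{eqn:RCB cut plane} to~\eqref{eqn:new cut plane}) for the element complementarity constraints, Proposition~\ref{prop:SOC difference boud} for the element energy limits, and the construction of the PSC together with~\eqref{eq:RCB power limit} and~\eqref{eqn:Sum of Element} for the dynamics, initial conditions, and power limits. Your added remarks---spelling out why step~5 of Algorithm~\ref{alg:Priority Stack} respects~\eqref{eqn:Elements Pc Limit} and observing that condition~(ii) of Proposition~\ref{prop:SOC difference boud} is automatic whenever~\eqref{eqn:RCB energy buffer} is nonempty---are consistent with, and slightly more explicit than, the paper's argument.
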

\begin{proof}
Eqs.~\eqref{eqn:Elements SOC} and \eqref{eqn:Sum of Element} imply \eqref{eqn:Composite SOC} and \eqref{eqn:Elements SOC IC} implies \eqref{eqn:Composite SOC IC}.      From Proposition~\ref{prop: cutting plane}, \eqref{eqn:RCB cut plane} ensures satisfaction of~\eqref{eqn:Elements CC}. Similarly, with 
$\max_{i, j}\{ |E^i_0-E^j_0|\}  \le \varepsilon$,
Proposition~\ref{prop:SOC difference boud} guarantees that~\eqref{eqn:Elements SOC Limit} holds, and the PSC with \eqref{eq:RCB power limit} and \eqref{eqn:Sum of Element} ensure that \eqref{eqn:Elements Pc Limit} holds for all $N$ elements. Thus, any feasible composite dispatch is realizable.
\end{proof} 

Note that for a fixed composite battery, if $N$ increases,
the cutting plane in~\eqref{eqn:RCB cut plane} approaches the cutting plane in~\eqref{eqn:Relaxed Composite cutting plane}. Furthermore, for $M\gg 1$, $\varepsilon$ in \eqref{eq:DEmax-def} tends to zero. That is, with faster control and more granular elements,~\eqref{eqn:Realizable Composite} approaches the relaxed model in~\eqref{eqn:RelaxedTighter}; however, with guaranteed realizability. 

The energy buffer $\varepsilon$ in~\eqref{eqn:RCB energy buffer} restricts the set of feasible energy trajectories relative to~\eqref{eqn:Composite Constraints} and~\eqref{eqn:Relaxed}. The feasible energy trajectories in the robust linear model presented in~\cite{RobustNawaf} are also a subset of the feasible energy trajectories in~\eqref{eqn:Composite Constraints} and~\eqref{eqn:Relaxed}. However, the proposed model in~\eqref{eqn:Realizable Composite} has a constant energy buffer $\varepsilon$ that is independent of the length of time horizon considered. This is in contrast to the robust model which has an energy buffer that increases with the length of the time horizon which, in the worst case, significantly reduces the feasible set of energy trajectories and can yield sub-optimal solutions. This is illustrated in Fig.~\ref{fig:feasible_energies} where the (worst case) feasible set of energy trajectories for the robust model and the RCB model for $\Delta t=1\text{hr}$ and $M=1$ ($\delta t=1\text{hr}$) and $M=5$ ($\delta t = 12\text{min}$) are compared for the battery parameters of a Tesla Powerwall shown in Table~\ref{tab:Battery_Parameters}. Clearly, smaller control time steps result in a smaller $\varepsilon$ and thus a larger feasible set of energy trajectories for the proposed model, while the robust model remains unchanged.

Finally, we note that the PSC policy is a greedy sufficient policy that fully activates BESS elements in sequence. More elaborate policies (greedy or not) can be developed that preferentially spread element loading for most aggregate dispatch signals, without violating the conditions of Theorem~\ref{the:RCB}. 

\begin{table}[h]
    \caption{Individual BESS Element Parameters}
    \begin{center}
    \setlength\tabcolsep{5pt}
     \centering
     \label{tab:Battery_Parameters}
    \begin{tabular}{lrl}
    \toprule
    \textbf{Parameters}        & \textbf{Value} & \textbf{Unit}\\ \midrule
    Charge/discharge efficiencies, $\eta_\text{c}/\eta_\text{d}$         & 0.95          & -\\ 
    Maximum charge/discharge power, $P_\text{max}$      & 5.00              & kW \\ 
    Maximum energy capacity, $ E_{\text{max}}$ & 13.5 & kWh\\\bottomrule
    \end{tabular}
    \end{center}
    \end{table}

    \begin{figure}[h]
        \centerline{\includegraphics[width=0.85\columnwidth]{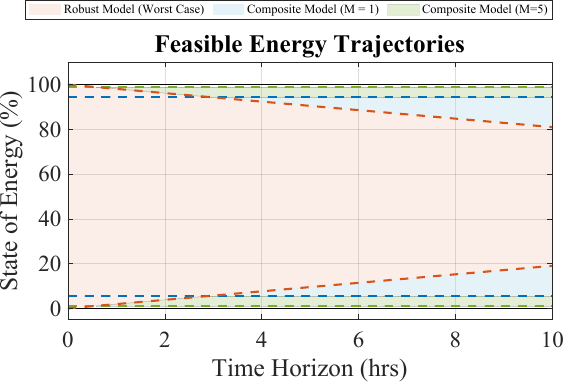}}
        \caption{Feasible set energy trajectories for the relaxed model, the robust model and the proposed realizable composite battery model (RCB).}
        \label{fig:feasible_energies}
    \end{figure}

\section{Simulation Results}

We apply the realizable composite battery (RCB) model in~\eqref{eqn:Realizable Composite} to two relevant use cases and compare its performance against the tighter relaxed model given in~\eqref{eqn:RelaxedTighter} and the robust model in~\cite{RobustNawaf} used under balanced operation. Also, a Mixed-Integer Linear Program (MILP) formulation with both equal and unequal power-sharing for $N$ elements in~\eqref{eqn:Elements Constraints} used for comparison. The battery element parameters given in Table~\ref{tab:Battery_Parameters} were used. The two use-cases are a) power reference tracking and b) revenue maximization. {\color{black}All cases are modeled using JuMP in Julia and were solved using Gurobi~v12.0.0. All cases were run on a Macbook Pro with an M1 Pro processor. }

\subsection{Power Reference Tracking}
In this use case, the goal is to minimize the power tracking error for a given reference signal as shown below,
\begin{equation}
   \min_{P_\text{c}[k],P_\text{d}[k]} \qquad \sum_{k=0}^{K-1} \left((P_\text{c}[k]-P_\text{d}[k])-P_{\text{ref}}[k]\right)^2.
\end{equation}
A timestep of $\Delta t = 3$ min, along with $N = 100$ was used. The RCB model was used three times with $M = 1$, $M = 5$, and $M = 10$ respectively.  The results from using each model are shown in Fig.~\ref{fig:sim_res} and the solve times along with the mean squared error (MSE) are shown in Table~\ref{tab:Power_Ref}. Note that the relaxed model resulted in a dispatch with simultaneous charging and discharging, which cannot be realized under equal power sharing \textcolor{black}{and results in the discrepancy between predicted MSE and actual MSE obtained while realizing the optimal dispatch}. The net power trajectory was implemented which caused the BESS to saturate causing the discrepancy between the predicted and realized MSE values.

\begin{figure}[h]
\centering
\includegraphics[width=0.85\columnwidth]{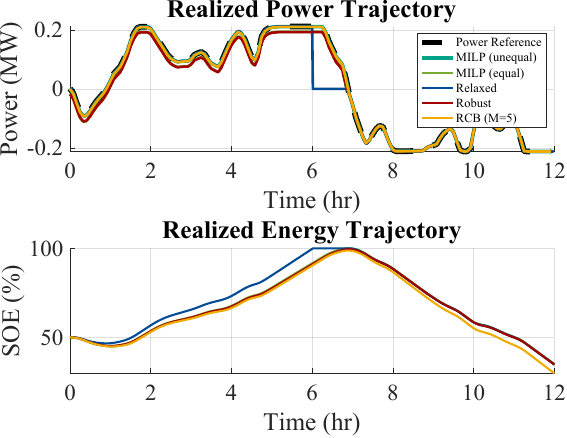}
\caption{Open-loop implementation of power optimal reference tracking power schedules using MILP, relaxed, robust and proposed RCB models. }
\label{fig:sim_res}

\end{figure}

\begin{table}[htbp]
    \caption{Power Reference Tracking Performance Metrics ($N=100$)}
    \begin{center}
    \setlength\tabcolsep{5pt}
     \centering
     \label{tab:Power_Ref}
    \begin{tabular}{lrrr}
        \toprule
        {\textbf{Model}} 
         & \textbf{Solver}& \textbf{Predicted}& \textbf{Actual} \\
          {}& \textbf{Time }&\textbf{MSE (kW$^2$)}& \textbf{MSE (kW$^2$)} \\
          \midrule
    MILP (unequal) & 492.60 s & 0 & 0  \\ 
    MILP (equal) & 34.37 s & 200.0& 200.0  \\ 
    \midrule
    Relaxed & 45 ms & 0 & 1889 \\ 
    Robust & 54 ms & 212 & 212 \\ 
    \midrule
    RCB (M = 1) & 50 ms & 23.93 & 23.93 \\ 
    RCB (M = 5) & 50 ms & 0.231 & 0.231\\ 
    RCB (M = 10) & 50 ms & 0 & 0\\ 

      \bottomrule
    \end{tabular}
    \end{center}
    \end{table}

{\color{black}The MILP model under unequal power sharing was able to track the reference signal with no tracking error. The MILP under equal power sharing couldn't track the reference signal as well, despite both solving to 0\%  optimality gap, indicating that there is a benefit of simultaneously charging and discharging in this use case. The MILP, under both equal and unequal power sharing, came with a significant computational burden compared to the rest of the formulations, attributed to the extra binary variables added to enforce the complementarity constraint.} The other formulations are linear and, as a result, have significantly faster solve times. The relaxed model predicts zero tracking error; however, the solution was unrealizable and caused the battery to saturate around $t = 6$ hrs, leading to a large realized tracking error. The robust model yields a realizable yet sub-optimal solution. The RCB model shows improved tracking error with increasing values of $M$, but even surpassing the robust model with $M=1$.

\subsection{Revenue Maximization}
In this use case, the goal is to maximize the predicted revenue from the composite battery system given a forecasted price signal. The objective function is given by,
\begin{equation}
    \max_{P_\text{c}[k],P_\text{d}[k]} \qquad \sum _{k=0}^{K} C[k](P_\text{d}[k]-P_\text{c}[k]),
    \label{eqn:Revenue}
\end{equation}
where $C[k]$ is the price at time $k$. The price signal was obtained from the California Independent System Operator (CAISO) day-ahead market prices, 
plotted in Fig~\ref{fig:price_forecast}.

\begin{figure}[h]
    \centering
    \includegraphics[width=7cm]{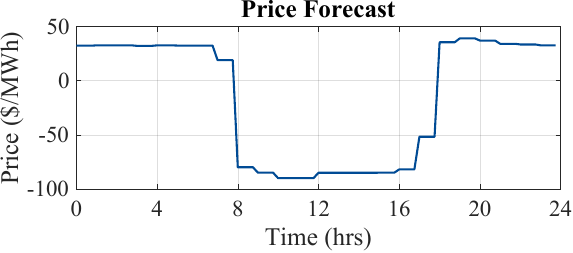}
    \caption{CAISO day-ahead price forecasts}
    \label{fig:price_forecast}
\end{figure}
A timestep of $\Delta t = 15$ min, along with $N = 100$ was used. The RCB model was tested for $M =$ 1, 5, and 10 which correspond to $\delta t = $ 15 min, 3 min and 1.5 min respectively. An additional case was added with $M=900$ ($\delta t = 1s$) to gain insight on model performance as $M \rightarrow \infty. $The results for the different models are shown in Table~\ref{tab:Revenuse}. 
    \begin{table}[h]
        \caption{Revenue Maximization Performance metrics ($N=100$)}
        \begin{center}
        \setlength\tabcolsep{5pt}
         \centering
         \label{tab:Revenuse}
        \begin{tabular}{lrrrr}
        \toprule
        {\textbf{Model}} 
         & \textbf{Solver}& \textbf{Predicted}& \textbf{Actual}& \textbf{Optimality} \\
        {}& \textbf{Time }&\textbf{Revenue (\$)}& \textbf{Revenue (\$)}&\textbf{Gap (\%)} \\
          \hline
        MILP (unequal) & $90$ mins & 2088.99 & 2088.99 & 0.02 \\ 
        MILP (equal) & 23 mins & 2088.97 & 2088.97   & 0 \\ 
        \midrule
        Relaxed & 1.5 ms & 2092.80 & 1940.35 & 0 \\ 
        Robust & 2.2 ms  & 1866.42 & 1866.42 & 0 \\ 
        \midrule
        RCB (M = 1)  & 1.7 ms & 1642.24 & 1642.24& 0 \\ 
        RCB (M = 5)  & 1.9 ms & 2000.50 & 2000.50& 0 \\ 
        RCB (M = 10) & 1.7 ms & 2045.24 & 2045.24& 0 \\ 
    RCB (M = 900) & 1.6 ms & 2089.48 & 2089.48 & 0\\
          \bottomrule
        \end{tabular}
        \end{center}
        \end{table}

{\color{black}The MILP models produced high revenues that could also be realized, but at substantial computational cost, with unequal power sharing taking significantly longer to solve and was not able to solve to 0\% optimality gap. } The relaxed model predicts the highest revenue but underestimates the SOE, leading to unpredicted battery saturation and a realized performance that is significantly worse. The robust model yields a realizable yet suboptimal solution. The RCB model for small values of $M$ results in a sub-optimal solution which can be attributed to the larger timestep used in this use-case leading to a larger value or $\varepsilon$ and, in turn, a more restricted feasible set. {\color{black}However, for $M\ge5$ ($\delta t \le 3$ min), the RCB model surpasses all other linear models and for larger values of $M$ the predicted revenue steadily increases, maintaining realizability. 

We note that for $M=900$ the revenue even surpasses the element-wise MILP solution, because individual elements are able to switch from charging to discharging and back at a faster controller timestep when implemented by the PSC. A MILP solution with \textcolor{black}{the control time step $\delta t$ would again achieve or exceed this result, but at much larger computational cost. The RCB model captures what is feasible at the fast control time step $\delta t$ without added complexity. }

\section{Conclusion}
A new linear dispatch model is presented and analyzed for a composite battery that permits unequal power sharing among its $N$ elements.
A priority-based disaggregation strategy is described and used to characterize sufficient conditions on the composite battery dispatch under which element constraints are guaranteed to be satisfied. Previous battery optimization implicitly assumed equal power sharing between all elements. By enabling unequal power sharing within a linear formulation, we have expanded the class of battery models that yield realizable power trajectories. {\color{black} Implementing the proposed controller is of interest and mainly requires communications to/from battery elements faster than $\delta t$ (similar to VPPs in the field today). 
Furthermore, embedding the presented linear models into energy system planning problems would inform the impact of battery systems more broadly.} {\color{black} Future work involves exploring other disaggregation strategies and non-linear battery models.}

\printbibliography
\end{document}